\newtheorem{theorem}{Theorem}
\newtheorem{assumption}{Assumption}
\newtheorem{proposition}{Proposition}
\newtheorem{definition}{Definition}
\newtheorem{problem}{Problem}
\newlength{\oldtextfloatsep}\setlength{\oldtextfloatsep}{\textfloatsep}
\newcolumntype{H}{>{\setbox0=\hbox\bgroup}c<{\egroup}@{}}
\title{MISLEADER: Defending against Model Extraction with Ensembles of Distilled Models}
\author{%
  Xueqi Cheng\\
  Florida State University\\
  Tallahassee, FL 32306 \\
  \texttt{xc25@fsu.edu} \\
  \And
  Minxing Zheng \\
  Carnegie Mellon University \\
  Pittsburgh, PA 15213 \\
  \texttt{minxingz@andrew.cmu.edu} \\
  \AND
  Shixiang Zhu \\
  Carnegie Mellon University \\
  Pittsburgh, PA 15213 \\
  \texttt{shixiangzhu@cmu.edu} \\
  \And
  Yushun Dong \\
  Florida State University\\
  Tallahassee, FL 32306 \\
  \texttt{yd24f@fsu.edu} \\
}
\begin{document}

\maketitle

\linespread{0.99}
\selectfont

\begin{abstract}
Model extraction attacks aim to replicate the functionality of a black-box model through query access, threatening the intellectual property (IP) of machine-learning-as-a-service (MLaaS) providers. Defending against such attacks is challenging, as it must balance efficiency, robustness, and utility preservation in the real-world scenario. Despite the recent advances, most existing defenses presume that attacker queries have out-of-distribution (OOD) samples, enabling them to detect and disrupt suspicious inputs. However, this assumption is increasingly unreliable, as modern models are trained on diverse datasets and attackers often operate under limited query budgets. As a result, the effectiveness of these defenses is significantly compromised in realistic deployment scenarios. To address this gap, we propose \texttt{MISLEADER} (ense\texttt{M}bles of d\texttt{I}Sti\texttt{L}led mod\texttt{E}ls \texttt{A}gainst mo\texttt{D}el \texttt{E}xt\texttt{R}action)
, a novel defense strategy that does not rely on OOD assumptions. MISLEADER formulates model protection as a bilevel optimization problem that simultaneously preserves predictive fidelity on benign inputs and reduces extractability by potential clone models. Our framework combines data augmentation to simulate attacker queries with an ensemble of heterogeneous distilled models to improve robustness and diversity. We further provide a tractable approximation algorithm and derive theoretical error bounds to characterize defense effectiveness. Extensive experiments across various settings validate the utility-preserving and extraction-resistant properties of our proposed defense strategy. Our code is available at \url{https://github.com/LabRAI/MISLEADER}.
\end{abstract}

\section{Introduction}\label{sec:intro}
\vspace{-.5em}

Machine-learning-as-a-service (MLaaS) platforms have made powerful models widely accessible through simple APIs~\cite{kim2018nsml, zhang2020mlmodelci, ribeiro2015mlaas}, enabling applications in healthcare, finance, and content moderation~\cite{eldahshan2024optimized, grigoriadis2023machine, habibi2024content}. While this paradigm accelerates AI deployment, it also raises security concerns—most notably, model extraction attacks, where adversaries query a deployed model to train a local replica~\cite{tramer2016stealing, liang2024model, kesarwani2018model}. Such attacks compromise intellectual property (IP), undermine service value, and pose risks of unauthorized misuse~\cite{pang2025modelshield, jagielski2020high, gong2021model, jiang2023comprehensive}. As MLaaS adoption continues to expand, developing robust protection mechanisms against model extraction has become a critical requirement for ensuring the security, reliability, and integrity of modern AI systems~\cite{miura2024megex, yan2022towards, li2025securing, zhao2025surveymodelextractionattacks, cheng2025atom}.

To address the growing threat of model extraction, recent years have witnessed significant progress in developing defense strategies~\cite{orekondy2019prediction, mazeika2022steer, wang2023defending, wang2024defense, luandynamic}. However, a major limitation shared by these existing methods is their reliance on the assumption that attack queries originate from out-of-distribution (OOD) data~\cite{juuti2019prada, liang2024defending}. This assumption is motivated by two practical considerations. First, \textit{limited information exposure}: MLaaS APIs typically return only input-output pairs, obscuring the in-distribution (ID) training data available to the attacker~\cite{wang2021zero, tramer2016stealing}. Second, \textit{confidentiality measures}: commercial APIs protect their training data to safeguard privacy and intellectual property~\cite{samuelson1999privacy, he2022protecting, he2022protecting, borgogno2019data}. These constraints make it appealing to detect or perturb OOD queries based on distributional divergence~\cite{zhang2023categorical}. Yet in practice, such assumptions may fail, as current large-scale models are often trained on vast amounts of diverse, public data~\cite{shen2024efficient, lin2023quda, dhamani2024introduction}, and attackers may possess similarly broad, task-relevant datasets. As a result, distinguishing adversarial queries from benign ones becomes unreliable or infeasible. Despite this concern, most existing defenses have yet to address model extraction in the absence of OOD assumptions.

In this work, we investigate the novel and critical problem of defending against model extraction without assuming the presence of out-of-distribution (OOD) queries. This is a non-trivial task that introduces several core challenges. In particular, we highlight the following three: (i) \textit{Utility-Robustness Trade-off}. In the absence of OOD indicators, the defender must treat all queries equally—regardless of whether they originate from benign users or adversaries. This requires the defense to preserve accurate predictions for legitimate users while simultaneously degrading the learnability of the model for attackers~\cite{mazeika2022steer, li2023defending}. Achieving this goal without explicit knowledge of query intent is fundamentally difficult and central to our problem. (ii) \textit{Absence of Attacker Training Information}. In both data-based and data-free model extraction scenarios, attackers may train clone models using surrogate datasets~\cite{tan2023deep, guan2024realistic} or generate synthetic queries via learned generators~\cite{liu2022model, truong2021data, sanyal2022towards}. However, the defender has no access to these attacker-specific resources or distributions. This means the defense must be constructed without relying on any information about the attacker's behavior or inputs. (iii) \textit{Theoretical Understanding}. Despite growing empirical interest in model extraction, there is limited theoretical analysis of how defense mechanisms influence extraction risk. A formal foundation is essential to quantify and explain defense effectiveness in real-world scenarios.

To address these challenges, we propose \texttt{MISLEADER} (ense\underline{M}bles of d\underline{I}Sti\underline{L}led mod\underline{E}ls \underline{A}gainst mo\underline{D}el \underline{E}xt\underline{R}action), a unified and theoretically grounded defense framework for model extraction. MISLEADER tackles the utility-robustness trade-off by formulating a bilevel optimization objective that preserves fidelity on benign inputs while minimizing the success of potential model clones. To overcome the lack of access to attacker training data, it introduces a data augmentation strategy that approximates attacker queries using transformed variants of the training set, thereby reducing both data-based and data-free threats to a unified and tractable bilevel form. To further strengthen robustness, MISLEADER replaces the single defense model with an ensemble of heterogeneous distilled models, leveraging architectural diversity to increase output variance and disrupt attacker alignment. Finally, we provide theoretical justification through generalization bounds using Rademacher complexity and characterize the extraction risk gap via Wasserstein distance, alongside empirical results validating the effectiveness of MISLEADER across diverse datasets and attack scenarios.

In summary, our contributions are as follows: 
\begin{itemize}[left=0pt,itemsep=0pt]
    \item \textbf{New Problem Setting for Model Extraction Defense.} We pioneer the study of model extraction defense without relying on OOD assumptions, and formalize a principled mathematical objective that captures the essential trade-offs in this setting. This shift reflects a more realistic deployment environment for MLaaS platforms.

    \item \textbf{Unified Defense with Augmentation and Ensembling.} We propose \texttt{MISLEADER}, a practical defense framework that (i) defines a unified bilevel optimization objective for model extraction attacks; (ii) adopts an optimization strategy based on data augmentation; and (iii) ensembles heterogeneous distilled models to boost robustness and preserve utility.

    \item \textbf{Theoretical Guarantees for Generalization and Risk.} We provide a rigorous theoretical foundation for \textsc{MISLEADER} by using Rademacher complexity to bound generalization error, and applying Wasserstein distance to analyze extraction risk under distributional shift. 
    These analyses offer the first formal justification of defenses without OOD assumptions.
\end{itemize}

\section{Preliminaries} \label{sec:prelim}
\vspace{-.5em}
\subsection{Notations}
\vspace{-.5em}
In this paper, we use lowercase letters (e.g., \( n \), \( \lambda \)) to denote scalars, and bold lowercase letters (e.g., \( \boldsymbol{\theta} \)) for vectors. 
Sets are represented using calligraphic letters (e.g., \( \mathcal{P} \), \( \mathcal{Q} \)). We denote the distribution of the defender's training data as \( \mathcal{P} \), and input samples from this distribution as \( x \sim \mathcal{P} \). A predictive model is written as \( f(x; \boldsymbol{\theta}) \), where \( \boldsymbol{\theta} \) parameterizes the model, mapping an input \( x \in \mathbb{R}^d \) to an output in \( \mathbb{R}^k \). The \emph{target model} \( f_t(x; \boldsymbol{\theta}_t) \) refers to the deployed black-box model hosted by the MLaaS provider. The \emph{defense model} \( d(x; \boldsymbol{\theta}_d) \) is trained by the defender to approximate the target model on inputs from \( \mathcal{P} \), while selectively modifying its predictions to hinder model extraction. The attacker constructs queries \( x_i \sim \mathcal{Q} \), where \( \mathcal{Q} \) denotes the attacker’s accessible input distribution, and collects responses \( y_i = f_t(x_i) \). Using the resulting labeled set \( \{x_i, y_i\}_{i=1}^N \), the attacker trains a \emph{clone model} \( f_s(x; \boldsymbol{\theta}_s) \in \mathcal{F}_s \), where \( \mathcal{F}_s \) is the hypothesis space accessible to the attacker. We use a bounded loss function \( \mathcal{L}(\cdot, \cdot) \leq K \) to quantify the discrepancy between model outputs, such as cross-entropy (with bounded scores) or Jensen–Shannon (JS) divergence.

\subsection{Model Extraction Defense}
\vspace{-.5em}
We consider the model extraction threat in the context of machine-learning-as-a-service (MLaaS), where a deployed target model \( f_t(x; \boldsymbol{\theta}_t) \) exposes only black-box access to users. The attacker can submit input queries and observe the model’s responses but does not have access to its architecture, training data, or parameters. Let \( \mathcal{Q} \) denote the attacker’s query distribution. The attacker constructs a query set \( \{x_i\}_{i=1}^N \), where each \( x_i \sim \mathcal{Q} \), and collects the corresponding outputs \( y_i = f_t(x_i) \). These outputs may be soft labels (probability vectors) or hard labels (class indices), depending on the API.

\begin{wrapfigure}[20]{r}{0.4\textwidth}
  \vspace{-11pt}  
  \centering
  \includegraphics[width=\linewidth]{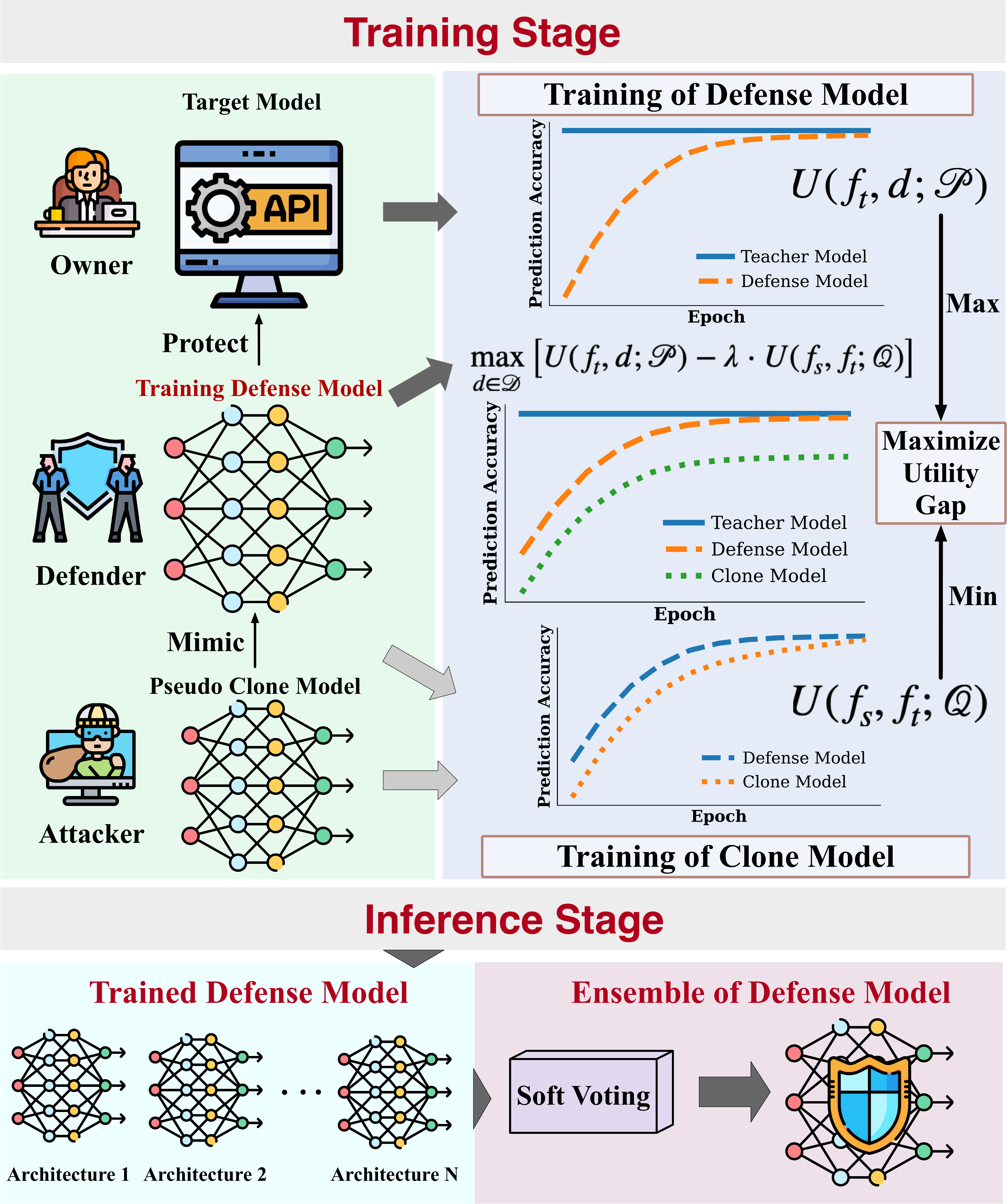}
  \caption{The overview of our proposed MISLEADER framework}
  \label{fig:overview}
\end{wrapfigure}

Using the labeled dataset \( \{x_i, y_i\}_{i=1}^N \), the attacker trains a \emph{clone model} \( f_s(x; \boldsymbol{\theta}_s) \in \mathcal{F}_s \), drawn from a hypothesis class \( \mathcal{F}_s \), to approximate the behavior of \( f_t \). The model extraction is considered successful if \( f_s \) closely mimics \( f_t \) under the attacker’s input distribution \( \mathcal{Q} \). This is formally captured as follows:

\begin{definition}[Threat Model]
Let \( f_t \) be a target model and \( \mathcal{Q} \) the attacker’s query distribution. The goal of attackers is to learn a clone model \( f_s \in \mathcal{F}_s \) that approximates the behavior of the target:
\begin{equation}
\min_{f_s \in \mathcal{F}_s} \; \mathbb{E}_{x \sim \mathcal{Q}} \left[ \mathcal{L}(f_s(x), f_t(x)) \right],
\end{equation}
where \( \mathcal{L} \) is a task-specific loss function (e.g., cross-entropy or KL divergence).
\end{definition}

Given the above definition of the threat model, we now formally define the model extraction defense. In particular, we seek to construct a defense that proactively mitigates model extraction without requiring 
knowledge of the attacker’s query distribution. This leads to the following:

\begin{problem}[Model Extraction Defense]
Given a target model \( f_t \) trained on a benign data distribution \( \mathcal{P} \) and a training dataset \( \mathcal{X}_{\text{train}} \sim \mathcal{P} \), our goal is to learn a defense model \( d \in \mathcal{D} \) that replaces \( f_t \) as the deployed black-box model. The defense model must preserve predictive utility on inputs from \( \mathcal{P} \), while degrading the effectiveness of any attacker-trained clone model \( f_s \in \mathcal{F}_s \) querying \( d \), without relying on any assumption that attacker queries originate from an out-of-distribution (OOD) source.
\end{problem}

\section{Methodology}
\label{sec:method}
\vspace{-.5em}
In this section, we present our proposed defense framework, \texttt{MISLEADER}, which unifies model extraction defense under a bilevel objective that balances utility for benign users and resistance against extraction across both data-based and data-free settings. We introduce a unified optimization strategy based on data augmentation that approximates attacker queries and enables tractable training without knowledge of the attacker's inputs. Finally, we enhance robustness and utility by ensembling heterogeneous distilled models using soft voting to amplify output diversity and prediction stability. An overview of the framework is shown in Figure~\ref{fig:overview}.

\subsection{Defense Objective Formulation}\label{sec:defense-objective}
\vspace{-.5em}

To defend against model extraction, our goal is to construct a defense model \( d \in \mathcal{D} \) that preserves utility for benign users while hindering the success of attacker-driven model replication. In this paper, we define utility as the complement of a normalized loss. This design ensures that higher utility corresponds to stronger alignment between models while preserving interpretability in the bounded range \([0, 1]\), facilitating tractable optimization and principled comparisons across models and distributions~\cite{orekondy2019prediction, stanton2021does}. The utility function is defined as follows:

\begin{definition}[Normalized Model Agreement Utility]
For any pair of models \( f_1, f_2 \) and input distribution \( \mathcal{D} \), we define their agreement utility as
\[
U(f_1, f_2; \mathcal{D}) = 1 - \frac{1}{K} \mathbb{E}_{x \sim \mathcal{D}} \mathcal{L}(f_1(x), f_2(x)),
\]
where \( \mathcal{L}(\cdot, \cdot) \) is a bounded loss function with \( \mathcal{L}(\cdot, \cdot) \leq K \). This utility quantifies the similarity between model outputs, normalized to lie in \([0, 1]\), with higher values indicating closer alignment.
\end{definition}

Based on the above, we define the \emph{defender’s utility} \( U(f_t, d; \mathcal{P}) \) as the agreement between the defense model \( d \) and the target model \( f_t \) on the defender’s training data \( \mathcal{P} \), reflecting prediction consistency for benign users. In contrast, the \emph{attacker’s utility} \( U(f_s, f_t; \mathcal{Q}) \) captures how well the attacker’s clone \( f_s \in \mathcal{F}_s \) replicates \( f_t \) over the attacker’s query distribution \( \mathcal{Q} \). To balance the competing objectives, the defender solves the following bilevel optimization:
\begin{equation}
\label{eq:defense-bilevel}
\max_{d \in \mathcal{D}} \left[ U(f_t, d; \mathcal{P}) - \lambda \cdot U(f_s, f_t; \mathcal{Q}) \right], 
\quad \text{s.t.} \quad f_s = \arg\min_{f \in \mathcal{F}_s} \mathbb{E}_{x \sim \mathcal{Q}} \mathcal{L}(f(x), d(x)),
\end{equation}
where \( \lambda > 0 \) controls the trade-off between preserving utility and suppressing extraction.

To make this objective concrete in practical settings, we distinguish between two canonical attacker scenarios that reflect common real-world conditions: \textit{data-based} and \textit{data-free} model extraction. In the former, the attacker queries the model using an external surrogate dataset; in the latter, queries are synthesized using a generator trained from scratch. While both follow the same high-level defense objective, they differ in structure—inducing a bilevel optimization in the data-based case and a tri-level one in the data-free case. We formalize each setting below.

\textbf{Data-Based Model Extraction (DBME) Defense.} Here, the attacker has access to a surrogate dataset \( \mathcal{X}_{\text{sur}} \sim \mathcal{Q} \), drawn from a distribution potentially different from the defender’s data \( \mathcal{P} \). The attacker queries the deployed model on \( \mathcal{X}_{\text{sur}} \) to collect outputs for training a clone \( f_s \). The defender learns \( d \) using its own data to preserve utility while suppressing learnability on the surrogate queries:
\begin{equation}
\min_{d \in \mathcal{D}} \max_{f_s \in \mathcal{F}_s} 
\left[
\mathbb{E}_{x \sim \mathcal{P}} \mathcal{L}(f_t(x), d(x))
-
\lambda \cdot \mathbb{E}_{x \sim \mathcal{Q}} \mathcal{L}(f_s(x), d(x))
\right].
\end{equation}

\textbf{Data-Free Model Extraction (DFME) Defense.} Here, the attacker synthesizes queries using a generator \( g \in \mathcal{G} \) that transforms latent codes \( z \sim \mathcal{N}(0, I) \) into query samples \( g(z) \). The attacker queries the model on \( g(z) \) and uses the responses to train a clone \( f_s \). The defender again learns \( d \) to maintain fidelity on \( \mathcal{P} \) while resisting exploitation via synthetic queries:
\begin{equation}
\min_{d \in \mathcal{D}} \max_{f_s \in \mathcal{F}_s} 
\left[
\mathbb{E}_{x \sim \mathcal{P}} \mathcal{L}(f_t(x), d(x))
-
\lambda \cdot \max_{g \in \mathcal{G}} 
\mathbb{E}_{z \sim \mathcal{N}(0, I)} \mathcal{L}(f_s(g(z)), d(g(z)))
\right].
\end{equation}

For both scenarios, we design the loss functions to reflect the goals of the defender and the attacker. For the first term, we adopt a knowledge distillation loss that combines the target model output \( f_t(x) \) and the ground-truth label \( y \), enabling the defense model \( d(x) \) to maintain high utility for benign users. For the second term, we use Kullback–Leibler (KL) divergence, as it captures the attacker’s objective of imitating the full output distribution of the model, beyond just the predicted class.

\subsection{Unified Optimization Strategy}
\vspace{-.5em}

To enable tractable optimization of the defense objective in Section~\ref{sec:defense-objective}, we propose a unified strategy based on data augmentation to simulate attacker-facing queries without relying on OOD assumptions. This section is organized into three parts: we first describe the \textit{augmentation pipeline design} that generates diverse proxy queries from the training data, then introduce the \textit{unified optimization objective} that applies to both data-based and data-free settings, and finally detail the \textit{optimization procedure} that jointly updates the defense and attacker models.

\textbf{Augmentation Pipeline Design.}
The transformation pipeline includes random resized cropping (for scale and translation), horizontal flipping (for spatial symmetry), affine transformations (for geometric distortion), color jittering (for perturbing brightness, contrast, and saturation), and random grayscale conversion (for reduced color information). These transformations expand the support of the training distribution in directions that are likely to overlap with attacker queries, promoting robustness under both surrogate-driven and generative attacks.

\textbf{Unified Optimization Objective.}
Let \( \tilde{\mathcal{P}} = \mathcal{A}_{\text{aug}}(\mathcal{P}) \) denote the augmented input distribution induced by applying the augmentation operator \( \mathcal{A}_{\text{aug}} \) to the training distribution \( \mathcal{P} \). We formulate a unified defense objective that applies to both data-based and data-free settings:
\begin{equation}
\label{eq:unified-objective}
\min_{d \in \mathcal{D}} \max_{f_s \in \mathcal{F}_s} 
\left[
\mathbb{E}_{x \sim \mathcal{P}} \mathcal{L}(f_t(x), d(x)) 
- 
\lambda \cdot \mathbb{E}_{x \sim \tilde{\mathcal{P}}} \mathcal{L}(f_s(x), d(x))
\right],
\end{equation}
where the first term enforces alignment with the target model on clean data, and the second penalizes extractability based on simulated attacker queries sampled from the augmented distribution.

\begin{algorithm}[t]
\caption{Unified Optimization of Defense Model Against Model Extraction}\label{Alg}
\begin{algorithmic}[1]
\State \textbf{Input:} Training set \( \mathcal{D} = \{(x_i, y_i)\} \); pre-trained target model \( f_t \); learning rates \( \eta_d \), \( \eta_s \); trade-off parameter \( \lambda \); number of epochs \( N \); batch size \( B \); attacker steps \( a_{\text{iter}} \); augmentation operator \( \mathcal{A}_{\text{aug}} \); $\alpha$; temperature T
\State \textbf{Output:} Optimized defense model \( d \)
\State Initialize defense model \( d \) and attacker model \( f_s \)
\State Generate augmented attacker dataset: \( \tilde{\mathcal{X}} = \mathcal{A}_{\text{aug}}(\mathcal{X}_{\text{train}}) \)
\For{epoch \( = 1 \) to \( N \)}
    \For{each minibatch \( \{(x, y)\} \sim \mathcal{D} \), \( \tilde{x} \sim \tilde{\mathcal{X}} \) of size \( B \)}
        \For{\( j = 1 \) to \( a_{\text{iter}} \)} \Comment{Update attacker}
            \State \( \mathcal{L}^{\text{attacker}} = \mathcal{L}(f_s(\tilde{x}), d(\tilde{x})) \)
            \State \( f_s \gets f_s - \eta_s \cdot \nabla_{f_s} \mathcal{L}^{\text{attacker}} \)
        \EndFor
        \State Compute defense utility loss: \( \mathcal{L}^{\text{defense}} = \mathcal{L}(f_t(x), d(x)) \)
        \State Recompute attacker loss: \( \mathcal{L}^{\text{attacker}} = \mathcal{L}(f_s(\tilde{x}), d(\tilde{x})) \)
        \State Total loss: \( \mathcal{L}_{\text{total}} = \mathcal{L}^{\text{defense}} - \lambda \cdot \mathcal{L}^{\text{attacker}} \)
        \State Update defense: \( d \gets d - \eta_d \cdot \nabla_d \mathcal{L}_{\text{total}} \)
    \EndFor
\EndFor
\end{algorithmic}
\end{algorithm}

\subsection{Ensembles of Distilled Models}
\label{sec:ensem}
\vspace{-.5em}

To improve robustness against model extraction while maintaining utility, \texttt{MISLEADER} replaces the single defense model \( d \) with an ensemble of heterogeneous distilled models. This ensemble design is motivated by both empirical and theoretical findings. Prior studies have shown that attackers struggle to replicate target behavior when there is a mismatch in model architectures~\cite{wang2023defending, wang2024defense, luandynamic}. Such architectural gaps hinder transferability and reduce alignment with the target’s decision boundary~\cite{jagielski2020high, oliynyk2023man, zhao2023minimizing}. In parallel, ensemble learning is widely recognized for reducing variance and noise, thus improving prediction accuracy and stability for benign users.

\textbf{Training Models with Distillation.} To train each individual defense model in the ensemble, we optimize the unified objective defined in Eq.~\eqref{eq:unified-objective}. The first term is a distillation loss that encourages the defense model to mimic the target model on benign inputs. Specifically, we follow prior knowledge distillation work~\cite{hinton2015distilling, park2019relational} and define the loss as:
\begin{equation}
\mathcal{L}^{\text{defense}} = (1 - \alpha) \cdot \mathcal{L}_{\text{CE}}(d(X), Y) + \alpha \cdot T^2 \cdot \text{KL}\left( \text{softmax}\left( \frac{f_t(X)}{T} \right) \,\|\, \text{softmax}\left( \frac{d(X)}{T} \right) \right),
\end{equation}
where \( \mathcal{L}_{\text{CE}} \) is the standard cross-entropy loss, and the second term is the KL divergence between softened output distributions. The coefficient \( \alpha \in [0, 1] \) balances the contribution of label supervision and teacher guidance, and \( T > 0 \) is a temperature parameter that softens the probability distributions to provide richer learning signals. The second term in Eq.~\eqref{eq:unified-objective}, denoted as \( \mathcal{L}^{\text{attacker}} \), applies a KL divergence loss to attacker-facing augmented inputs, which penalizes output similarity between the defense model and the clone model to reduce extractability. We optimize this bilevel objective using an alternating procedure that iteratively updates the attacker and defense models. The training procedure takes a labeled dataset and a pre-trained target model as input, and outputs a single defense model trained to preserve utility while resisting model extraction. The full training algorithm for one model in the ensemble is presented in Algorithm~\ref{Alg}.

\textbf{Inference with Ensemble of Models.} During inference, \texttt{MISLEADER} aggregates predictions from all ensemble members using soft voting~\cite{kumari2021ensemble}. Each model outputs a probability distribution over classes, which are averaged to form the final prediction. This strategy enables independent distillation of diverse models and allows for scalable, parallel deployment without requiring sequential optimization or data partitioning, making it well-suited for our defense setting.

\section{Theoretical Analysis}
\label{sec:theory}
\vspace{-.5em}


To support our proposed defense strategy, this section develops theoretical foundations for MISLEADER. We aim to address two core questions: $(i)$ how effectively do learned defense and attacker models generalize from finite samples to the true query distribution under data-based extraction, and $(ii)$ how does distributional shift induced by generator impact the attacker’s ability to replicate the target model under data-free extraction. These analyses offer formal guarantees and clarify the theoretical limits and effectiveness of our approach. 

\subsection{Generalization Error in Data-Based Extraction}
\vspace{-.5em}

We first analyze how well the defense model $d \in \mathcal{D}$ and attacker model $f_s \in \mathcal{F}_s$, trained on finite query samples, generalize to the true query distribution $\mathcal{P}$. 
For analytical tractability, we examine a simplified setting where the surrogate data distribution aligns with the true query distribution (i.e., $\mathcal{Q} = \mathcal{P}$). This allows us to formulate the problem within an empirical risk minimization framework and derive tight generalization bounds. To quantify model capacity, we introduce the notion of Rademacher complexity, where this complexity measure effectively captures how rich or expressive a function class is—higher complexity indicates greater expressivity but potentially poorer generalization. Hence, we define the population and empirical risks as below.
\begin{definition}[Rademacher Complexity]
Let $\mathcal{G}$ be a class of real-valued functions mapping $\mathcal{X}$ to $[0,1]$. The empirical Rademacher complexity of $\mathcal{G}$ with respect to samples $\{x_i\}_{i=1}^n$ is defined as
\[
\mathfrak{R}_n(\mathcal{G}) := \mathbb{E}_{\sigma} \left[ \sup_{g \in \mathcal{G}} \frac{1}{n} \sum_{i=1}^n \sigma_i g(x_i) \right],
\]
where $\sigma_i \sim \text{Uniform}\{-1, +1\}$ are independent Rademacher variables.
Here, the function class is
\[
\mathcal{L}_{\mathcal{D}, \mathcal{F}_s} = \left\{ x \mapsto \mathcal{L}(f_t(x), d(x)) - \lambda \mathcal{L}(f_s(x), d(x)) \;\middle|\; d \in \mathcal{D}, f_s \in \mathcal{F}_s \right\}.
\]
\end{definition}

\begin{definition}[Population and Empirical Risk]
\label{definition:pop_emp_risk}
For a defender $d \in \mathcal{D}$ and clone model $f_s \in \mathcal{F}_s$, define the population and empirical risks:
\begin{align*}
R(f_s, d) &:= \mathbb{E}_{x \sim \mathcal{P}} \left[ \mathcal{L}(f_t(x), d(x)) - \lambda \mathcal{L}(f_s(x), d(x)) \right], \\
\hat{R}(f_s, d) &:= \frac{1}{n} \sum_{i=1}^n \left[ \mathcal{L}(f_t(x_i), d(x_i)) - \lambda \mathcal{L}(f_s(x_i), d(x_i)) \right],
\end{align*}
where $\mathcal{P}$ is the underlying query distribution.  Also, define the corresponding minimax risk as: 
\begin{align*}
R_{\text{pop}} := \min_{d \in \mathcal{D}} \max_{f_s \in \mathcal{F}_s} R(f_s, d), \quad
R_{\text{emp}} := \min_{d \in \mathcal{D}} \max_{f_s \in \mathcal{F}_s} \hat{R}(f_s, d).
\end{align*}
\end{definition}

For the generalization analysis, we assume the loss function is bounded, which is a mild assumption that holds for common losses like cross-entropy or KL divergence when model outputs are bounded (as is typical in classification tasks). Using this assumption, we establish Theorem~\ref{thm:uniform-gen}, which provides a uniform generalization bound for data-based attacks. We further further strengthen this result with detailed proof in the Appendix. 
\begin{assumption}[Bounded Loss Function]
\label{assump:bounded_loss}
The loss function $\mathcal{L}$ is upper bounded by a constant $B > 0$.
\end{assumption}
\begin{theorem}[Generalization Error for Data-Based Attack]
\label{thm:uniform-gen}
Suppose $f_t$ is a fixed target model, and with Assumption~\ref{assump:bounded_loss}, then with probability at least $1 - \delta$, the following uniform bound holds:
\[
\left| R_{\text{pop}} - R_{\text{emp}} \right| \leq \sup_{d \in \mathcal{D}, f_s \in \mathcal{F}_s} \left| R(f_s, d) - \hat{R}(f_s, d) \right|
\leq 2B \cdot \mathfrak{R}_n(\mathcal{L}_{\mathcal{D}, \mathcal{F}_s}) + B \sqrt{\frac{\log(1/\delta)}{2n}}.
\]
\end{theorem}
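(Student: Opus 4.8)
The statement splits into two inequalities, which I would establish separately. The left one, $|R_{\text{pop}} - R_{\text{emp}}| \le \sup_{d,f_s}|R(f_s,d) - \hat R(f_s,d)|$, is a deterministic stability property of the $\min_d\max_{f_s}$ operator and uses no randomness. The right one is the classical Rademacher uniform-convergence bound applied to the single function class $\mathcal{L}_{\mathcal{D},\mathcal{F}_s}$, assembled from McDiarmid's bounded-differences inequality and a symmetrization argument, with Assumption~\ref{assump:bounded_loss} supplying the boundedness both steps need.

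\emph{Step 1 (min--max stability).} Put $\varepsilon := \sup_{d\in\mathcal{D},\,f_s\in\mathcal{F}_s}|R(f_s,d)-\hat R(f_s,d)|$. For each fixed $d,f_s$, $\hat R(f_s,d)-\varepsilon \le R(f_s,d) \le \hat R(f_s,d)+\varepsilon$. Taking $\sup_{f_s}$ through all three expressions (additive constants commute with $\sup$, and $\sup$ is monotone) gives $\sup_{f_s}\hat R(f_s,d)-\varepsilon \le \sup_{f_s}R(f_s,d) \le \sup_{f_s}\hat R(f_s,d)+\varepsilon$ for every $d$, and applying $\inf_d$ the same way yields $|R_{\text{pop}}-R_{\text{emp}}| \le \varepsilon$; attainment of the extrema is irrelevant here.

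\emph{Step 2 (uniform convergence over $\mathcal{L}_{\mathcal{D},\mathcal{F}_s}$).} Writing $h(x)=\mathcal{L}(f_t(x),d(x))-\lambda\mathcal{L}(f_s(x),d(x))\in\mathcal{L}_{\mathcal{D},\mathcal{F}_s}$, we have $R(f_s,d)=\mathbb{E}_{x\sim\mathcal{P}}[h(x)]$ and $\hat R(f_s,d)=\frac1n\sum_{i=1}^n h(x_i)$, so the right-hand supremum equals $\Phi(S):=\sup_{h\in\mathcal{L}_{\mathcal{D},\mathcal{F}_s}}\bigl|\mathbb{E}[h]-\frac1n\sum_i h(x_i)\bigr|$ for an i.i.d.\ sample $S=(x_1,\dots,x_n)$ from $\mathcal{P}$. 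By Assumption~\ref{assump:bounded_loss} each $h$ lies in an interval of length $\Theta(B)$, so swapping one $x_i$ moves $\frac1n\sum_i h(x_i)$, hence $\Phi$, by $O(B/n)$; McDiarmid then gives $\Phi(S)\le\mathbb{E}[\Phi(S)]+B\sqrt{\log(1/\delta)/(2n)}$ with probability at least $1-\delta$. To control $\mathbb{E}[\Phi(S)]$ I would run the textbook symmetrization: replace $\mathbb{E}[h]$ by $\mathbb{E}_{S'}[\frac1n\sum_i h(x_i')]$ for an independent ghost sample $S'$, pull the supremum outside by Jensen, insert i.i.d.\ Rademacher signs $\sigma_i$ (legitimate because $h(x_i)-h(x_i')$ is symmetrically distributed), and bound the two-sided supremum by the sum of its two one-sided counterparts using the invariance of the Rademacher complexity under $h\mapsto-h$ and $\sigma\mapsto-\sigma$. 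This delivers $\mathbb{E}[\Phi(S)]\le 2B\,\mathfrak{R}_n(\mathcal{L}_{\mathcal{D},\mathcal{F}_s})$ (the extra $B$ reflecting the $[0,1]$-normalization built into the stated Definition, applied to a class of range $\Theta(B)$). Chaining Steps~1 and~2 gives the claim.

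The routine parts are the min--max inequality and the McDiarmid application. The step that needs genuine care is the symmetrization bound on $\mathbb{E}[\Phi(S)]$ together with honest bookkeeping of the boundedness constant: one must pin down exactly how the two-sided supremum $\sup_h|\cdot|$ is handled so the constants come out as stated, and should also dispatch the standard measurability caveats for suprema over $\mathcal{D}$ and $\mathcal{F}_s$. I expect the Appendix version to spell out precisely these points.
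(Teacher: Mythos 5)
Your proposal matches the paper's argument: the paper likewise splits the claim into the deterministic min--max stability step (which it asserts without the detail you supply) and then invokes the standard symmetrization/McDiarmid uniform-convergence bound with Rademacher complexity (citing Theorem 3.3 of Mohri et al.) under the bounded-loss assumption. Your constant-bookkeeping caveats (the $\lambda$-dependent range of the loss class and the normalization behind the factor $B$) are real but apply equally to the paper's own statement, so there is no gap relative to the published proof.
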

Theorem~\ref{thm:uniform-gen} establishes two key generalization results. First, it provides a uniform bound on the generalization error for defender and attacker models, showing that the deviation between the population risk \( R(f_s, d) \) and empirical risk \( \hat{R}(f_s, d) \) is controlled by the Rademacher complexity of the function class \( \mathcal{L}_{\mathcal{D}, \mathcal{F}_s} \) and scales as \( \mathcal{O}(1/\sqrt{n}) \). Second, it bounds the gap between the population minimax risk \( R_{\text{pop}} \) and its empirical counterpart \( R_{\text{emp}} \), implying how the empirical solution—obtained from finite samples approximates the optimal solution under the true query distribution.

\subsection{Impact of Query Distribution Shift in Data-Free Extraction}
\label{sec:theory_comparison}
\vspace{-.5em}

A key challenge in data-free model extraction is the distributional shift between the attacker’s synthetic query distribution \( \mathcal{P}_g \), induced by the generator \( g \), and the true query distribution \( \mathcal{P} \). This shift limits the attacker’s ability to replicate the target model \( f_t \). To evaluate defense effectiveness, we define the \emph{extraction loss} as the expected discrepancy between the target and clone models over true queries:
\[
\mathbb{E}_{x \sim \mathcal{P}} \left[ \mathcal{L}(f_t(x), f_s(x)) \right].
\]
A strong defense increases this extraction loss, making it harder for the attacker to succeed. Let \( f_s^{\text{DB}} \) and \( f_s^{\text{DF}} \) denote the clone models trained under data-based and data-free settings, respectively. 
Theorem~\ref{thm:df-generalization-gap} is proposed to quantify the performance degradation caused by this distribution shift and its proof is also deferred to the Appendix.
\begin{theorem}[Generalization Gap for Data-Free Extraction via Distribution Shift]
\label{thm:df-generalization-gap}
Let \( \mathcal{L} \) be a loss function that is jointly \( \rho \)-Lipschitz in both arguments. Suppose the target model \( f_t \) and the data-free clone model \( f_s^{\text{DF}} \) are each \( L \)-Lipschitz with respect to their input. Then the generalization gap of the data-free clone model is bounded by:
\[
\left| \mathbb{E}_{x \sim \mathcal{P}} \left[ \mathcal{L}(f_t(x), f_s^{\text{DF}}(x)) \right] - \mathbb{E}_{x \sim \mathcal{P}_g} \left[ \mathcal{L}(f_t(x), f_s^{\text{DF}}(x)) \right] \right| \leq 2 \rho L \cdot W_1(\mathcal{P}, \mathcal{P}_g),
\]
where \( W_1(\mathcal{P}, \mathcal{P}_g) \) denotes the 1-Wasserstein distance between the $\mathcal{P}$ and $\mathcal{P}_g$.
\end{theorem}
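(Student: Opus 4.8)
The plan is to reduce the claim to a single Lipschitz-continuity estimate combined with the Kantorovich--Rubinstein dual representation of the $1$-Wasserstein distance. Define the composite function $h : \mathbb{R}^d \to \mathbb{R}$ by $h(x) := \mathcal{L}(f_t(x), f_s^{\text{DF}}(x))$. The left-hand side of the bound is exactly $\left| \mathbb{E}_{x \sim \mathcal{P}}[h(x)] - \mathbb{E}_{x \sim \mathcal{P}_g}[h(x)] \right|$, so it suffices to control how much the expectation of $h$ can change when we move from $\mathcal{P}$ to $\mathcal{P}_g$, and the natural tool is the duality $W_1(\mathcal{P}, \mathcal{P}_g) = \sup_{\phi : \mathrm{Lip}(\phi) \le 1} \left| \mathbb{E}_{\mathcal{P}}[\phi] - \mathbb{E}_{\mathcal{P}_g}[\phi] \right|$.

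First I would establish that $h$ is Lipschitz with constant $2\rho L$. For any $x, x' \in \mathbb{R}^d$, we have $|h(x) - h(x')| = \left| \mathcal{L}(f_t(x), f_s^{\text{DF}}(x)) - \mathcal{L}(f_t(x'), f_s^{\text{DF}}(x')) \right|$. Applying the joint $\rho$-Lipschitzness of $\mathcal{L}$ bounds this by $\rho\left( \|f_t(x) - f_t(x')\| + \|f_s^{\text{DF}}(x) - f_s^{\text{DF}}(x')\| \right)$, and then the $L$-Lipschitzness of $f_t$ and of $f_s^{\text{DF}}$ in the input each contribute a factor $L\|x - x'\|$, yielding $|h(x) - h(x')| \le 2\rho L \|x - x'\|$.

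Next I would plug this into the dual formula: the normalized function $\tfrac{1}{2\rho L} h$ is $1$-Lipschitz, hence an admissible test function in the supremum defining $W_1$, so $\left| \mathbb{E}_{\mathcal{P}}[\tfrac{1}{2\rho L} h] - \mathbb{E}_{\mathcal{P}_g}[\tfrac{1}{2\rho L} h] \right| \le W_1(\mathcal{P}, \mathcal{P}_g)$, and multiplying through by $2\rho L$ gives the claimed inequality. Alternatively, and perhaps more elementarily, one can take an optimal coupling $\gamma$ of $\mathcal{P}$ and $\mathcal{P}_g$ and estimate $\left| \mathbb{E}_{(X,X') \sim \gamma}[h(X) - h(X')] \right| \le 2\rho L\, \mathbb{E}_\gamma\|X - X'\| = 2\rho L\, W_1(\mathcal{P}, \mathcal{P}_g)$, which sidesteps invoking duality.

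The main obstacle is bookkeeping rather than depth: I must fix and state the precise convention for ``jointly $\rho$-Lipschitz'', in particular which norm is placed on the product of output spaces, since the constant $2\rho L$ is exactly what the additive aggregation $\|(a_1,a_2) - (b_1,b_2)\| = \|a_1 - b_1\| + \|a_2 - b_2\|$ produces; a Euclidean convention would instead give $\sqrt{2}\,\rho L$, so the statement implicitly uses the additive form, and I would make this explicit. I would also record the mild regularity needed for the duality (or coupling) step — that $\mathbb{R}^d$ with the chosen ground metric $\|\cdot\|$ is Polish and that $h$ is integrable, the latter following from boundedness of $\mathcal{L}$ under Assumption~\ref{assump:bounded_loss} — and note that $W_1$ must be taken with respect to the same metric used in the Lipschitz hypotheses for the constants to match.
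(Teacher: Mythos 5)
Your proposal is correct and takes essentially the same route as the paper: the paper likewise composes the joint $\rho$-Lipschitzness of $\mathcal{L}$ (stated there explicitly in the additive $\ell_1$ convention you flag) with the $L$-Lipschitzness of $f_t$ and $f_s^{\text{DF}}$ to get the $2\rho L\,\|x - x'\|$ estimate, and then passes to $W_1(\mathcal{P},\mathcal{P}_g)$ via an optimal transport map $T$ with $T(x)\sim\mathcal{P}_g$ and Jensen's inequality. Your coupling (or Kantorovich--Rubinstein) phrasing is the same argument and, if anything, slightly more careful, since an optimal coupling always exists whereas the Monge map the paper invokes need not in general.
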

This result also provides an intuitive justification for the design of MISLEADER. By amplifying the distribution shift, for example through ensembling heterogeneous models, the generator-induced distribution \( \mathcal{P}_g \) may deviate further from the true query distribution \( \mathcal{P} \), leading to a larger Wasserstein distance \( W_1(\mathcal{P}, \mathcal{P}_g) \). As a result, it becomes more challenging for a data-free attacker to accurately mimic the target model, thereby enhancing robustness against extraction.

\section{Experiments}
\label{sec:exp}
\vspace{-.5em}


In this section, we empirically evaluate the effectiveness of the proposed \texttt{MISLEADER} framework. Specifically, we aim to address the following research questions: \textbf{RQ1}: How effectively does MISLEADER defend against model extraction? \textbf{RQ2}: To what extent does MISLEADER preserve serviceability for benign users? \textbf{RQ3}:  How do model architectures influence defense effectiveness?

\subsection{Experimental Setup}
\vspace{-.5em}

\textbf{Datasets.} We evaluate the effectiveness of our method against model extraction using MNIST~\cite{deng2012mnist}, CIFAR-10, CIFAR-100~\cite{krizhevsky2009learning}, which are commonly used benchmarks in this line of research. MNIST consists of grayscale images of handwritten digits across 10 classes. CIFAR-10 and CIFAR-100 contain 32×32 natural images with 10 and 100 object categories, respectively. 

\textbf{Baselines.} We compare SOTA DFME and defense baselines.
\textit{Attack Baselines:}  
(1) Soft-label attack: DFME~\cite{truong2021data}.  
(2) Hard-label attack: DFMS-HL~\cite{sanyal2022towards}.  
\textit{Defense Baselines:}  
We compare to:  
(1) Undefended: the target model without using any defense strategy;  
(2) Random Perturb (RandP)~\cite{orekondy2019prediction}: randomly perturb the output probabilities;  
(3) P-poison~\cite{orekondy2019prediction};  
(4) GRAD~\cite{mazeika2022steer}: gradient redirection defense.  
(5) MeCo~\cite{wang2023defending};
(6) ACT~\cite{wang2024defense};
(7) DNF~\cite{luandynamic}.
Following~\cite{wang2023defending}, we set the output perturbation budget equal to 1.0 for those defense baselines in the experiments to generate strong defense. That is, $\|\mathbf{y} - \hat{\mathbf{y}} \|_1 \leq 1.0$, where $\mathbf{y}$ and $\hat{\mathbf{y}}$ are the unmodified/modified output probabilities, respectively. This is applied to RandP, P-poison, and GRAD to enhance their defense by inducing more substantial output perturbations. We put more baseline details in Appendix.

\textbf{Evaluation Metrics.} We evaluate the performance of MISLEADER across three dimensions to answer the research questions. (1) \emph{Defense Effectiveness:} We measure how well the defense impairs model extraction by evaluating the test accuracy of the clone model trained via black-box access to the defense model. Lower clone accuracy indicates stronger resistance to extraction. (2) \emph{Serviceability:} We assess how well the defense model preserves the predictive utility of the target model by comparing their test accuracies against various defense strategies. Higher test accuracy suggests better predictive utility. (3) \emph{Impact of Model Architectures:} To understand the role of model architecture in defense effectiveness, we compare the clone model accuracy under DFME attacks across various combinations of clone and defense model architectures on CIFAR-10 dataset.


\begin{table}[t]
    \centering
    \scriptsize 
    \setlength\tabcolsep{4pt}
    \caption{Classification accuracy of clone model on \textbf{CIFAR-10} and \textbf{CIFAR-100} with \textit{ResNet34} as the target model. Overall, Misleader exhibits the best performance.}
    \label{tab:cl_acc_dfme}
    \resizebox{\textwidth}{!}{
\begin{tabular}{llcccccc}
        \toprule
        \multirow{2}{*}{\textbf{Attack}} & \multirow{2}{*}{\textbf{Defense}} & \multicolumn{3}{c}{\textbf{CIFAR-10} Clone Model Architecture} & \multicolumn{3}{c}{\textbf{CIFAR-100} Clone Model Architecture} \\ \cmidrule(lr){3-5}  \cmidrule(lr){6-8}  
        & & ResNet18\_8X $\downarrow$ & MobileNetV2 $\downarrow$ & DenseNet121 $\downarrow$ & ResNet18\_8X $\downarrow$ & MobileNetV2 $\downarrow$ & DenseNet121 $\downarrow$ \\         \midrule
\multirow{6}{*}{DFME} &
  Undefended  &
  87.36 $\pm$ 0.78\% &
  75.23 $\pm$ 1.53\% &
  73.89 $\pm$ 1.29\% &
  58.72 $\pm$ 2.82\% &
  28.36 $\pm$ 1.97\% &
  27.28 $\pm$ 2.08\% \\
 & RandP      & 84.28 $\pm$ 1.37\%   & 70.56 $\pm$ 2.23\%   & 70.03 $\pm$ 2.38\%   & 41.69 $\pm$ 2.91\%   & 22.75 $\pm$ 2.19\%   & 23.61 $\pm$ 2.70\%   \\
 & P-poison    & 78.06 $\pm$ 1.73\%   & 66.32 $\pm$ 1.36\%   & 68.75 $\pm$ 1.40\%   & 38.72 $\pm$ 3.06\%   & 20.87 $\pm$ 2.61\%   & 21.89 $\pm$ 2.93\%   \\
 & GRAD       & 79.33 $\pm$ 1.68\%   & 65.82 $\pm$ 1.67\%   & 69.06 $\pm$ 1.57\%   & 39.07 $\pm$ 2.72\%   & 20.71 $\pm$ 2.80\%   & 22.08 $\pm$ 2.78\%   \\
 & MeCo        & 51.68 $\pm$ 1.96\%   & 46.53 $\pm$ 2.09\%   & 61.38 $\pm$ 2.41\%   & 29.57 $\pm$ 1.97\%   & 12.18 $\pm$ 1.05\%   & 10.79 $\pm$ 1.36\%   \\
 & ACT         & 46.57 $\pm$ 2.83\%   & 40.32 $\pm$ 2.96\%   & 49.25 $\pm$ 2.67\%   & 23.95 $\pm$ 2.38\%   & 10.09 $\pm$ 2.53\%   & 6.26 $\pm$ 2.38\%    \\
 & DNF        & 53.91 $\pm$ 2.30\%   & 46.32 $\pm$ 1.45\%   & 47.21 $\pm$ 2.15\%   & 18.03 $\pm$ 3.03\%   & 10.82 $\pm$ 1.34\%   & 6.75 $\pm$ 1.23\%    \\
 & MISLEADER  & \textbf{40.70 $\pm$ 2.89 \%}     & \textbf{39.82 $\pm$ 1.73 \%}                    & \textbf{31.73 $\pm$ 1.14 \%}                   & \textbf{15.28 $\pm$ 1.39\%}    & \textbf{7.71 $\pm$ 1.65\%}       & \textbf{4.87 $\pm$ 1.32\%}     \\ \hline
\multirow{6}{*}{DFMS-HL} &
  Undefended  &
  84.67 $\pm$ 1.90\% &
  79.28 $\pm$ 1.87\% &
  68.87 $\pm$ 2.08\% &
  72.57 $\pm$ 1.28\% &
  62.71 $\pm$ 1.68\% &
  63.58 $\pm$ 1.79\% \\
 & RandP       & 84.02 $\pm$ 2.31\%   & 78.71 $\pm$ 1.93\%   & 68.16 $\pm$ 2.23\%   & 72.43 $\pm$ 1.43\%   & 62.06 $\pm$ 1.82\%   & 63.16 $\pm$ 1.73\%   \\
 & P-poison    & 84.06 $\pm$ 1.87\%   & 79.12 $\pm$ 1.72\%   & 68.05 $\pm$ 2.17\%   & 71.83 $\pm$ 1.32\%   & 61.83 $\pm$ 1.79\%   & 62.73 $\pm$ 1.91\%   \\
 & GRAD        & 84.28 $\pm$ 1.96\%   & 79.36 $\pm$ 1.91\%   & 69.81 $\pm$ 1.71\%   & 71.89 $\pm$ 1.37\%   & 62.60 $\pm$ 1.71\%   & 62.57 $\pm$ 1.80\%   \\
 & MeCo        & 76.86 $\pm$ 2.09\%   & 71.22 $\pm$ 1.87\%   & 62.33 $\pm$ 2.01\%   & 59.30 $\pm$ 1.70\%   & 55.32 $\pm$ 1.65\%   & 56.80 $\pm$ 1.86\%   \\
 & ACT        & 73.93 $\pm$ 2.67\%   & 71.97 $\pm$ 2.08\%   & 61.08 $\pm$ 2.39\%   & 55.38 $\pm$ 1.97\%   & 51.46 $\pm$ 1.89\%   & 52.29 $\pm$ 2.03\%   \\
& DNF  &
  76.51 $\pm$ 2.12\% &
  75.01 $\pm$ 1.25\% &
  61.02 $\pm$ 1.21\% &
  52.98 $\pm$ 2.24\% &
  48.41 $\pm$ 1.78\% &
  49.72 $\pm$ 1.24\% \\
 & MISLEADER  & \textbf{71.04 $\pm$ 2.53\%}   &  \textbf{67.46 $\pm$ 2.44\%}         & \textbf{60.43 $\pm$ 2.10\%}     & \textbf{55.82 $\pm$ 1.39\%}                  & \textbf{46.03 $\pm$ 1.63\%}                   & \textbf{45.82 $\pm$ 1.51\%}        \\ \hline
\end{tabular}
    }
\vspace{-5ex}
\end{table}

\textbf{Implementation Details.}
For soft-label attacks, following the setting of~\cite{truong2021data}, we set the total number of queries to 2M for MNIST, 20M for CIFAR-10, and 200M for CIFAR-100. For hard-label attacks, we adopt the query budgets used in~\cite{sanyal2022towards}, setting 8M for CIFAR-10 and 10M for CIFAR-100. Each experiment is repeated five times, and we report the mean and standard deviation of the results. To construct the ensemble model, we first train individual defense models with architectures ResNet18\_8x, MobileNetV2, and DenseNet121, and then combine them using soft voting as described in Section~\ref{sec:ensem}. During training, we employ the attacker in Algorithm~\ref{Alg} with the same architecture as each corresponding defense model. All the experiments are conducted on NVIDIA RTX 6000 GPUs. Additional details are provided in the Appendix.



\subsection{Defense Performance Against Model Extraction}
\vspace{-.5em}

To answer \textbf{RQ1} and understand how well Misleader defends against model extraction, we begin by evaluating the clone model’s accuracy under both DFME and DBME attacks. This section focuses on quantifying the effectiveness of various defenses in reducing the clone models performance.

\textbf{Performance of clone model against DFME.}
The results of defense against soft-label and hard-label DFME attack on CIFAR-10 and CIFAR-100 are shown in Table~\ref{tab:cl_acc_dfme}. Our backbone is based on ResNet34~\cite{he2016deep}. We use three distinct model architectures as the clone model architectures, which include ResNet18\_8x~\cite{he2016deep}, MobileNetV2~\cite{sandler2018mobilenetv2}, and DenseNet121~\cite{huang2017densely}. Compared to the undefended method, under soft-label attack setting, our method can significantly reduce clone model accuracy by 42\% to 47\% on the CIFAR-10 dataset and by 21\% to 44\% on the CIFAR-100 dataset. Under hard-label attack setting, our method can significantly reduce clone model accuracy by 8\% to 14\% on CIFAR-10 and around 17\% on CIFAR-100. The rest of results are shown in the Appendix.

\begin{wraptable}{r}{0.54\linewidth}\label{tab:utility}
\centering
\vspace{-1.2em}
\caption{Evaluation of the utility after applying different defense strategies.} 
\label{tab:utility}
\resizebox{\linewidth}{!}{%
\begin{tabular}{lccc}
\toprule
\textbf{Method} & \textbf{MNIST $\uparrow$} & \textbf{CIFAR-10 $\uparrow$} & \textbf{CIFAR-100 $\uparrow$} \\
\midrule
undefended  & $98.91 \pm 0.16\%$ & $94.91 \pm 0.37\%$ & $76.71 \pm 1.25\%$ \\
RandP         & $98.52 \pm 0.19\%$ & $93.98 \pm 0.28\%$ & $75.23 \pm 1.39\%$ \\
P-poison      & $98.87 \pm 0.35\%$ & $94.58 \pm 0.61\%$ & $75.42 \pm 1.21\%$ \\
GRAD & $98.73 \pm 0.31\%$ & $ 94.65 \pm 0.67\% $ & $75.60 \pm 1.45\%$ \\
MeCo          & $98.63 \pm 0.28\%$ & $94.17 \pm 0.56\%$ & $75.36 \pm 0.68\%$ \\
ACT & $98.90 \pm 0.37\% $ & $94.31 \pm 0.75\%$ & $ 75.78 \pm 0.73\%$ \\
DNF & $98.78 \pm 0.22\% $ & $94.34 \pm 0.07\%$ & $ 78.73 \pm 0.30\%$ \\
MISLEADER & \textbf{98.93 $\pm$ 0.23\%} & \textbf{95.46 $\pm$ 0.08\%} & \textbf{78.15 $\pm$ 0.21\%} \\
\bottomrule
\end{tabular}
}
\vspace{-1em}
\end{wraptable}

\textbf{Performance of clone model against DBME.}
We evaluate MISLEADER against traditional data-based model extraction (DBME) attacks using Knockoff Nets~\cite{orekondy2019knockoff}, where the adversary queries the victim model using data drawn from a distribution similar to its training set. Experimental results demonstrate that MISLEADER consistently outperforms baseline defenses, achieving improvements exceeding 5\% in many cases under DBME scenarios.


\subsection{Utility of Defense Models}
\vspace{-.5em}


To answer RQ2 and evaluate how well MISLEADER preserves serviceability for benign users, we measure the test accuracy of the defended model on clean, non-adversarial inputs. Specifically, we assess the utility of the target model under various defense strategies by reporting its test-time performance across different methods in Table~\ref{tab:utility}. As shown, MISLEADER consistently maintains high test accuracy, outperforming state-of-the-art baselines in most cases. Notably, the defended models trained under MISLEADER even surpass the original target model in accuracy across all three architectures. This improvement arises from the ensemble-based knowledge distillation, where multiple distilled models serve as teachers. Such ensembles capture richer predictive behavior and diverse decision boundaries, providing more informative soft targets during training. Combined with ground-truth supervision, this setup promotes smoother generalization and reduces overfitting. Prior studies have also observed that ensembles can enhance the performance of student models beyond the original teacher~\cite{fukuda2017efficient, beyer2022knowledge}, aligning with our findings in this work.


\subsection{Impact of Model Architectures}
\vspace{-.5em}
\begin{wrapfigure}{r}{0.5\textwidth}
  \vspace{-1em}  
  \centering
  \includegraphics[width=0.499\textwidth]{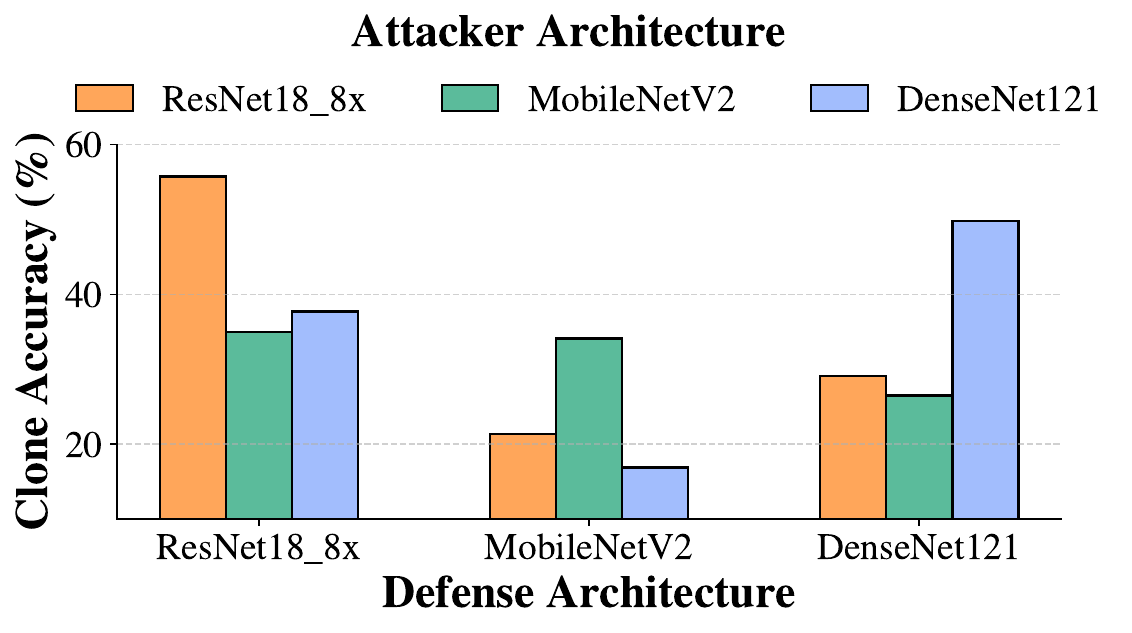}
  \vspace{-6pt}  
  \caption{Clone accuracy across different model architectures on CIFAR-10 under DFME attacks. Overall, matching attacker-defender architectures leads to higher extraction success.}
  \label{fig:clone-accuracy}
  \vspace{-10pt}
\end{wrapfigure}

To answer RQ3, we examine how different architectural choices in the defense model influence MISLEADER’s effectiveness against model extraction. Specifically, we conduct ablation experiments to isolate the impact of architectural diversity of both clone and defense models. 

Figure~\ref{fig:clone-accuracy} presents the clone model accuracy when attacking individual defense models of different architectures—ResNet18\_8x, MobileNetV2, and DenseNet121—using three distinct attacker architectures. These results correspond to the single-model defense setting, without ensembling. We observe a clear pattern: clone accuracy is highest when the attacker model shares the same architecture as the defense. For instance, a ResNet18\_8x defense is most susceptible to attacks by a ResNet18\_8x clone, and similar vulnerabilities are observed for MobileNetV2 and DenseNet121. In contrast, when the attacker architecture differs from the defense, clone accuracy drops significantly. This trend highlights the role of architectural alignment in enabling successful extraction. The reduced transferability across mismatched architectures suggests that incorporating diverse models into an ensemble can increase output variance, reduce alignment with any single surrogate, and thereby strengthen the defense. These findings empirically justify the ensemble design of \texttt{MISLEADER} and its emphasis on architectural heterogeneity.




\section{Conclusion}\label{sec:con}
\vspace{-.5em}
In this work, we present MISLEADER, a unified and theoretically grounded defense framework that addresses the model extraction threat in MLaaS settings without relying on OOD assumptions. By formulating a bilevel objective that jointly preserves utility for benign users and degrades extractability, MISLEADER overcomes critical limitations of prior work. Our approach integrates a data augmentation strategy to approximate attacker-facing queries and ensembles heterogeneous distilled models to enhance robustness. Furthermore, we establish theoretical guarantees through generalization bounds and Wasserstein-based risk analysis, offering the first formal justification for model extraction defense under realistic deployment scenarios. Extensive experiments on real-world datasets across various settings validate the efficacy of MISLEADER.

While MISLEADER shows strong performance, it opens several avenues for future improvement. For example, more efforts can be made to extend the framework to more adaptive threat models where the attacker and defender co-evolve their strategies. Also, exploring more efficient approximation methods may reduce training overhead and improve deployment feasibility. Lastly, evaluating MISLEADER on even larger-scale datasets would help further verify its scalability. 



\bibliographystyle{plain}
\bibliography{reference}

\newpage
\appendix
\section{Reproducibility}

In this section, we introduce the details of the experiments in this
paper for reproducibility. At the same time, we have uploaded all necessary code to our GitHub repository to reproduce the results presented in this paper: \url{https://github.com/LabRAI/MISLEADER}. All major experiments are encapsulated as shell scripts, which can be conveniently executed. We introduce the implementation details for reproducibility in the subsections below.

\subsection{Real-World Datasets}

\begin{wraptable}{r}{0.5\textwidth}
\centering
\scriptsize
\vspace{-2em}
\caption{Statistics of the real-world datasets.}
\label{tab:dataset-stats}
\begin{tabular}{@{}lccc@{}}
\toprule
\textbf{Dataset} & \textbf{\# Classes} & \textbf{Input Size} & \textbf{Train/Test Size} \\ \midrule
CIFAR-10        & 10                & $3 \times 32 \times 32$     & 50k / 10k         \\
CIFAR-100       & 100               & $3 \times 32 \times 32$     & 50k / 10k         \\
MNIST           & 10                & $1 \times 28 \times 28$     & 60k / 10k         \\
\bottomrule
\end{tabular}
\vspace{-1em}
\end{wraptable}

In this section, we briefly introduce the real-world graph datasets used in this paper, and all these datasets are commonly used datasets in model extraction defense tasks. We present their statistics in Table~\ref{tab:dataset-stats}. Specifically, CIFAR-10 includes low-resolution natural images across 10 common object categories such as cats, airplanes, and trucks, and is frequently used to benchmark lightweight models. CIFAR-100 offers a more fine-grained version of CIFAR-10, containing 100 distinct object categories, making it suitable for evaluating the scalability and precision of classification systems. MNIST contains grayscale images of handwritten digits and serves as a canonical benchmark for evaluating robustness and generalization on simple patterns. 

\subsection{Implementation of MISLEADER}

This paper implements MISLEADER based on PyTorch~\cite{paszke2017automatic} and optimizes all models using stochastic gradient descent~\cite{amari1993backpropagation} with momentum and cosine annealing learning rate scheduling~\cite{loshchilov2016sgdr}. The training procedure follows a bilevel optimization framework, where the attacker and defender models are alternately updated in each iteration. The defense model is trained using a knowledge distillation loss that combines soft predictions from an ensemble of teacher models and ground-truth labels, using a weighted sum of KL-divergence and cross-entropy. To simulate adversarial queries, we apply aggressive data augmentation to approximate the attacker’s query distribution. The distillation temperature \(T\), interpolation weight \(\alpha\), and attacker regularization coefficient \(\lambda\) are treated as tunable hyperparameters. In our experiments, we tune \(T\) within the range \([1, 10]\), \(\alpha\) within \([0.1, 0.9]\), and \(\lambda\) within \([10^{-3}, 10^{-1}]\). To improve training efficiency and stability, we adopt automatic mixed-precision (AMP)~\cite{micikevicius2017mixed} training with gradient scaling and apply gradient clipping. 

\subsection{Implementation of Neural Networks}

We implement the ResNet and LeNet model architectures in accordance with prior work~\cite{wang2023defending, wang2024defense, luandynamic}, ensuring consistency with established designs for fair comparison and reproducibility. Specifically, ResNet architecture using stacked residual blocks with skip connections, batch normalization, and ReLU activations. The network consists of an initial convolutional layer followed by four sequential residual stages, each downsampling spatial resolution through strided convolutions. The output features are aggregated via global average pooling and passed to a fully connected layer for final prediction. All convolutional layers are initialized using Kaiming normal initialization, and optional input normalization is supported via dataset-specific statistics. The LeNet family is implemented using a standard stack of convolutional layers, ReLU activations, and max-pooling, followed by fully connected layers for classification. The original LeNet5 consists of three convolutional layers with increasing channel sizes (6, 16, 120) and two fully connected layers (84, 10). 

\subsection{Implementation of Baselines}

\textbf{RandP}~\cite{orekondy2019prediction} adds random noise to the victim model's logits, making it harder to reconstruct true outputs. We adopt its official open-source code\footnote{\url{https://github.com/tribhuvanesh/prediction-poisoning}} for experiments.
    
\textbf{Prediction Poisoning (P-poison)}~\cite{orekondy2019prediction} perturbs predictions by maximizing the angular deviation between original and perturbed gradients to mislead gradient-based clone training. We adopt its official open-source code\footnote{\url{https://github.com/tribhuvanesh/prediction-poisoning}} for experiments.
    
\textbf{GRAD}~\cite{mazeika2022steer} redirects gradients to arbitrary directions, disrupting the learning signal for the clone model during extraction. We adopt its official open-source code\footnote{\url{https://github.com/mmazeika/model-stealing-defenses}} for experiments.

\textbf{MeCo}~\cite{wang2023defending} is a model extraction defense method that employs the distributionally robust defensive training. We adopt its official open-source code\footnote{\url{https://github.com/joey-wang123/DFME-DRO}} for experiments.

\textbf{ACT}~\cite{wang2024defense} uses Bayesian active watermarking to fine-tune the victim model and maximize the model's output change for OOD data. We adopt its official open-source code\footnote{\url{https://github.com/joey-wang123/Bayes-Active-Watermark/tree/main}} for experiments.

\textbf{DNF}~\cite{luandynamic} employs dynamic early-exit neural networks to introduce increased uncertainty for attackers. We adopt its official open-source code\footnote{\url{https://github.com/SYCodeShare/Dynamic-Neural-Fortresses}} for experiments.

\subsection{Implementation of Threat Models}

\textbf{Soft-Label DFME}~\cite{truong2021data}: This attack leverages a data generator to synthesize inputs, queries the victim model for soft-labels, and trains the clone model via minimizing KL divergence between victim and clone outputs. This is a soft-label setting, where the black-box API offers softmax probability outputs corresponding to input queries. We adopt the official code~\footnote{\url{https://github.com/cake-lab/datafree-model-extraction}} for experiments.

\textbf{DFMS-HL}~\cite{sanyal2022towards}: This method uses GANs pre-trained on unrelated classes to generate queries and collects only hard-label predictions from the victim. The generator and clone are trained using classification loss over pseudo-labeled samples. We adopt the official code~\footnote{\url{https://github.com/val-iisc/Hard-Label-Model-Stealing}} for experiments.

\textbf{Knockoff Nets}~\cite{orekondy2019knockoff}: This method is a data-based model extraction method that extracts the target black-box model using a relevant surrogate dataset to query the target model. Subsequently, the attacker trains a clone model with the surrogate dataset and incorporates the target model predictions on the surrogate dataset as the corresponding data labels. We adopt the official code~\footnote{\url{https://github.com/tribhuvanesh/knockoffnets}} for experiments.

\subsection{Packages Required for Implementation}
We perform all experiments on a server equipped with Nvidia A6000 GPUs. Below we list the key packages and their versions used in our implementation:

\begin{itemize}
  \item \textbf{Python} == 3.11
  \item \textbf{torch} == 2.2.1 + cu121
  \item \textbf{torchvision} == 0.17.1
  \item \textbf{torchaudio} == 2.2.1
  \item \textbf{torchtext} == 0.17.1
  \item \textbf{torch-cuda} == 12.1
  \item \textbf{numpy} == 1.26.4
  \item \textbf{scikit-learn} == 1.1.2
  \item \textbf{scipy} == 1.15.2
  \item \textbf{matplotlib} == 3.8.4
  \item \textbf{pandas} == 2.2.2
  \item \textbf{tqdm} == 4.66.4
  \item \textbf{tensorboard} == 2.16.2
  \item \textbf{protobuf} == 5.29.4
\end{itemize}

\section{Proofs}
\label{sec:appendix_proof}

In this section, we provide formal proofs for the theoretical results presented in Section~\ref{sec:theory}.

\subsection{Proof for Theorem \ref{thm:uniform-gen}}
\begin{proof}[Proof]
In definition \ref{definition:pop_emp_risk}, we define population and empirical risk as 
\[
R(f_s, d) := \mathbb{E}_{x \sim \mathcal{P}} \left[ \mathcal{L}(f_t(x), d(x)) - \lambda \mathcal{L}(f_s(x), d(x)) \right], \text{and}
\]
\[
\hat{R}(f_s, d) := \frac{1}{n} \sum_{i=1}^n \left[ \mathcal{L}(f_t(x_i), d(x_i)) - \lambda \mathcal{L}(f_s(x_i), d(x_i)) \right],
\]
where \( \{x_i\}_{i=1}^n \sim \mathcal{P} \) are i.i.d. samples. We also define their corresponding population and empirical minimax risk as $R_{pop}$ and $R_{emp}$, respectively. Our goal is to bound the quantity
\[
\left| R_{\text{pop}} - R_{\text{emp}} \right| := \left| \min_{d \in \mathcal{D}} \max_{f_s \in \mathcal{F}_s} R(f_s, d) - \min_{d \in \mathcal{D}} \max_{f_s \in \mathcal{F}_s} \hat{R}(f_s, d) \right|.
\]
We can first bound it with:
\[
\left| R_{\text{pop}} - R_{\text{emp}} \right| \leq \sup_{d \in \mathcal{D}, f_s \in \mathcal{F}_s} \left| R(f_s, d) - \hat{R}(f_s, d) \right|.
\]

Since $\mathcal{L}$ is bounded by $B$ by assumption, then by the symmetrization lemma and uniform convergence bound with Rademacher complexity as in Theorem 3.3 of \cite{Mohri:2018}, we obtain:
\[
\sup_{d \in \mathcal{D}, f_s \in \mathcal{F}_s} \left| R(f_s, d) - \hat{R}(f_s, d) \right| \leq 2B \cdot \mathfrak{R}_n(\mathcal{L}_{\mathcal{D}, \mathcal{F}_s}) + B \sqrt{\frac{\log(1/\delta)}{2n}},
\]
with probability at least $1 - \delta$, where $\mathfrak{R}_n$ is the empirical Rademacher complexity of the composite function class $\mathcal{L}_{\mathcal{D}, \mathcal{F}_s}$. Hence,
\[
\left| R_{\text{pop}} - R_{\text{emp}} \right| \leq \sup_{d \in \mathcal{D}, f_s \in \mathcal{F}_s}  \left| R(f_s, d) - \hat{R}(f_s, d) \right| \leq 2B \cdot \mathfrak{R}_n(\mathcal{L}_{\mathcal{D}, \mathcal{F}_s}) + B \sqrt{\frac{\log(1/\delta)}{2n}}.
\]
\end{proof}

To tighten the generalization bound, we apply the subadditivity of Rademacher complexity to upper bound the complexity of $\mathcal{L}_{\mathcal{D}, \mathcal{F}_s}$ by the sum over the defense and attack function classes.

\begin{proposition}[Subadditivity of Rademacher Complexity]
\label{prop:subadditivity}
Assume the loss function $\mathcal{L}$ is bounded. Then the Rademacher complexity of the composite function class $\mathcal{L}_{\mathcal{D}, \mathcal{F}_s}$ satisfies
\[
\mathfrak{R}_n(\mathcal{L}_{\mathcal{D}, \mathcal{F}_s}) 
\leq 
\mathfrak{R}_n(\mathcal{L}_{\mathcal{D}}) + \lambda \, \mathfrak{R}_n(\mathcal{L}_{\mathcal{F}_s}), \text{where}
\]
\[
\mathcal{L}_{\mathcal{D}} := \left\{ x \mapsto \mathcal{L}(f_t(x), d(x)) : d \in \mathcal{D} \right\}, \quad
\mathcal{L}_{\mathcal{F}_s} := \left\{ x \mapsto \mathcal{L}(f_s(x), d(x)) : d \in \mathcal{D}, f_s \in \mathcal{F}_s \right\}.
\]
\end{proposition}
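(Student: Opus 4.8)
The plan is to exhibit every member of $\mathcal{L}_{\mathcal{D},\mathcal{F}_s}$ as an additive combination of one function drawn from $\mathcal{L}_{\mathcal{D}}$ and one (negatively scaled) function drawn from $\mathcal{L}_{\mathcal{F}_s}$, and then to invoke the standard structural properties of empirical Rademacher complexity: monotonicity under set inclusion, the fact that the supremum of a sum is at most the sum of suprema, positive homogeneity, and invariance under negation of the class (a consequence of the symmetry $-\sigma_i \overset{d}{=} \sigma_i$ of the Rademacher variables). Assumption~\ref{assump:bounded_loss} enters only to guarantee that all suprema and expectations below are finite, so that these manipulations are legitimate and no $\infty-\infty$ ambiguity arises.

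First I would note that for any $g \in \mathcal{L}_{\mathcal{D},\mathcal{F}_s}$ there are $d \in \mathcal{D}$ and $f_s \in \mathcal{F}_s$ with $g(x) = \mathcal{L}(f_t(x),d(x)) - \lambda\,\mathcal{L}(f_s(x),d(x))$; setting $h(x) := \mathcal{L}(f_t(x),d(x)) \in \mathcal{L}_{\mathcal{D}}$ and $k(x) := \mathcal{L}(f_s(x),d(x)) \in \mathcal{L}_{\mathcal{F}_s}$ gives $g = h - \lambda k$. Hence $\mathcal{L}_{\mathcal{D},\mathcal{F}_s} \subseteq \mathcal{L}_{\mathcal{D}} - \lambda\,\mathcal{L}_{\mathcal{F}_s} := \{\, h - \lambda k : h \in \mathcal{L}_{\mathcal{D}},\ k \in \mathcal{L}_{\mathcal{F}_s}\,\}$, where on the right-hand side the defense model appearing in $h$ and the one appearing in $k$ may now be chosen independently. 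Since the empirical Rademacher complexity is monotone under inclusion (the defining supremum can only grow over a larger class, for every fixed sample $\{x_i\}$), it suffices to bound $\mathfrak{R}_n(\mathcal{L}_{\mathcal{D}} - \lambda\,\mathcal{L}_{\mathcal{F}_s})$.

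Then I would unfold the definition and split the supremum over the sum:
\[
\mathfrak{R}_n(\mathcal{L}_{\mathcal{D}} - \lambda\,\mathcal{L}_{\mathcal{F}_s})
= \mathbb{E}_\sigma \sup_{h,k} \frac{1}{n}\sum_{i=1}^n \sigma_i\bigl(h(x_i) - \lambda k(x_i)\bigr)
\le \mathbb{E}_\sigma \sup_{h} \frac{1}{n}\sum_{i=1}^n \sigma_i h(x_i) \;+\; \mathbb{E}_\sigma \sup_{k} \frac{1}{n}\sum_{i=1}^n \sigma_i\bigl(-\lambda k(x_i)\bigr).
\]
The first term is exactly $\mathfrak{R}_n(\mathcal{L}_{\mathcal{D}})$. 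For the second, pulling $\lambda > 0$ out of the supremum and using that $(-\sigma_1,\dots,-\sigma_n)$ is equal in distribution to $(\sigma_1,\dots,\sigma_n)$ turns it into $\lambda\,\mathbb{E}_\sigma \sup_{k} \frac{1}{n}\sum_{i=1}^n \sigma_i k(x_i) = \lambda\,\mathfrak{R}_n(\mathcal{L}_{\mathcal{F}_s})$. Adding the two pieces yields the claimed inequality.

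I do not anticipate a serious obstacle, but the one point that genuinely requires care is the \emph{coupling} of $d$ across the two loss terms in the original class $\mathcal{L}_{\mathcal{D},\mathcal{F}_s}$: the decomposition into a Minkowski difference is not an equality, only an inclusion, because it frees the two copies of $d$. This is harmless precisely because enlarging the class can only loosen (never violate) the bound, and because $\mathcal{L}_{\mathcal{F}_s}$ is defined to quantify over all pairs $(d,f_s)$, so the relaxed class is still expressed in terms of the stated function classes. A minor secondary point is the ordering of operations — the sup-of-sum $\le$ sum-of-sups inequality and the linearity of $\mathbb{E}_\sigma$ must be applied before the sign manipulation — which the boundedness assumption makes fully rigorous.
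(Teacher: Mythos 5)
Your argument is correct and follows the same route the paper takes, which it compresses into one line ("sub-additivity of supremum and linearity of expectation"): you simply spell out the mechanics — decoupling the two copies of $d$ via an inclusion, monotonicity, sup-of-sum $\le$ sum-of-sups, linearity of $\mathbb{E}_\sigma$, positive homogeneity, and the sign symmetry of the Rademacher variables. No gaps; your explicit handling of the coupling of $d$ and of the negation is a careful elaboration of what the paper leaves implicit.
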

\begin{proof}
This follows from the sub-additivity of supremum and linearity of expectation.
\end{proof}


\subsection{Proof for Theorem \ref{thm:df-generalization-gap}}
\begin{proof}[Proof]
Let \( f_t \) and \( f_s^{\text{DF}} \) be \( L \)-Lipschitz with respect to their inputs, and suppose the loss function \( \mathcal{L} \) is jointly \( \rho \)-Lipschitz in both arguments. 
By jointly $\rho$-Lipschitz, we mean that for any inputs $(u_1, v_1)$ and $(u_2, v_2)$, we have 
$
|\mathcal{L}(u_1, v_1) - \mathcal{L}(u_2, v_2)| \leq \rho \left( \|u_1 - u_2\| + \|v_1 - v_2\| \right)
$ with $\ell_1$ norm.

Consider two distributions \( \mathcal{P} \) (true query distribution) and \( \mathcal{P}_g \) (generator-induced distribution).
For any \( x \sim \mathcal{P} \), let \( T(x) \sim \mathcal{P}_g \) be the optimal transport map that minimizes the 1-Wasserstein distance between \( \mathcal{P} \) and \( \mathcal{P}_g \). Then we can write:
\[
\begin{aligned}
& \left| \mathbb{E}_{x \sim \mathcal{P}} \left[ \mathcal{L}(f_t(x), f_s^{\text{DF}}(x)) \right] - \mathbb{E}_{x \sim \mathcal{P}_g} \left[ \mathcal{L}(f_t(x), f_s^{\text{DF}}(x)) \right] \right| \\
&= \left| \mathbb{E}_{x \sim \mathcal{P}} \left[ \mathcal{L}(f_t(x), f_s^{\text{DF}}(x)) - \mathcal{L}(f_t(T(x)), f_s^{\text{DF}}(T(x))) \right] \right| \\
&\leq \mathbb{E}_{x \sim \mathcal{P}} \left[ \left| \mathcal{L}(f_t(x), f_s^{\text{DF}}(x)) - \mathcal{L}(f_t(T(x)), f_s^{\text{DF}}(T(x))) \right| \right],
\end{aligned}
\] where the final inequality follows from Jensen's inequality.

By the Lipschitz property of \( \mathcal{L} \), and noting that both \( f_t \) and \( f_s^{\text{DF}} \) are \( L \)-Lipschitz, we have
\begin{align}
&\left| \mathcal{L}(f_t(x), f_s^{\text{DF}}(x)) - \mathcal{L}(f_t(T(x)), f_s^{\text{DF}}(T(x))) \right| \notag \\
&\leq \rho \cdot \left\| f_t(x) - f_t(T(x)) \right\| + \rho \cdot \left\| f_s^{\text{DF}}(x) - f_s^{\text{DF}}(T(x)) \right\| 
\leq 2 \rho L \cdot \|x - T(x)\|.
\end{align}

Taking expectations:
\[
\begin{split}
\left| \mathbb{E}_{x \sim \mathcal{P}} \left[ \mathcal{L}(f_t(x), f_s^{\text{DF}}(x)) \right] 
- \mathbb{E}_{x \sim \mathcal{P}_g} \left[ \mathcal{L}(f_t(x), f_s^{\text{DF}}(x)) \right] \right| 
\leq 2 \rho L \cdot \mathbb{E}_{x \sim \mathcal{P}} \left[ \|x - T(x)\| \right] \\
= 2 \rho L \cdot W_1(\mathcal{P}, \mathcal{P}_g).
\end{split}
\]
\end{proof}

\section{Complementary Experimental Results}
Beyond the main results, we also include complementary experiments—covering DFME on MNIST, DBME on real-world datasets, and query budget analysis—to validate MISLEADER’s robustness and utility further. These results highlight its effectiveness across diverse real-world scenarios.

\subsection{DFME Defense on MNIST}


In this subsection, we present additional experimental results regarding the DFME defense on MNIST, as presented in Table~\ref{tab:mnist_defense}. The substantial advantage of MISLEADER on MNIST stems from its ensemble-based design, which introduces architectural diversity and more complex decision boundaries that resist imitation. Specifically, under a constrained query budget of 2M, MISLEADER reduces clone accuracy to below 16\%, which substantially outperforms all other defenses. Notably, even when the query budget is increased tenfold to 20M, the clone model accuracy remains low at \(14.09 \pm 1.03\%\) (LeNet5 as clone model), further demonstrating the robustness of our strategy. Overall, all the experiments in this paper highlight that MISLEADER's ensemble distillation amplifies distributional shifts and disrupts attacker generalization.

\begin{table}[h]
\vspace{-1em}
\centering
\footnotesize
\caption{Accuracy of the clone model on the MNIST dataset using LeNet5 as the target model.}
\vspace{1em}
\label{tab:mnist_defense}
\resizebox{\textwidth}{!}{
\setlength{\tabcolsep}{10pt}
\begin{tabular}{llccc}
\toprule
\multirow{2}{*}{\textbf{Attack}} & \multirow{2}{*}{\textbf{Defense}} & \multicolumn{3}{c}{\textbf{CIFAR-10} Clone Model Architecture}  \\
\cmidrule(lr){3-5}  
& & LeNet5 $\downarrow$ & LeNet5-Half $\downarrow$ & LeNet5-1/5 $\downarrow$ \\
\midrule
\multirow{8}{*}{DFME} 
& undefended & $98.76 \pm 0.27\%$ & $96.65 \pm 0.43\%$ & $94.62 \pm 0.69\%$ \\
& RandP      & $92.25 \pm 0.32\%$ & $91.86 \pm 0.49\%$ & $90.37 \pm 0.73\%$ \\
& P-poison   & $88.34 \pm 0.78\%$ & $86.09 \pm 0.96\%$ & $84.98 \pm 1.07\%$ \\
& GRAD       & $87.22 \pm 0.70\%$ & $85.38 \pm 0.91\%$ & $84.23 \pm 1.16\%$ \\
& MeCo       & $85.07 \pm 0.87\%$ & $82.93 \pm 1.27\%$ & $82.57 \pm 1.53\%$ \\
& ACT        & $81.67 \pm 0.96\%$ & $80.18 \pm 1.38\%$ & $80.09 \pm 1.76\%$ \\
& DNF        & $55.97 \pm 3.19\%$ & $42.26 \pm 5.78\%$ & $51.84 \pm 4.27\%$ \\
& MISLEADER  & \textbf{11.81 $\pm$ 0.64\%} & \textbf{11.80 $\pm$ 2.98\%} & \textbf{15.91 $\pm$ 2.97\%} \\
\bottomrule
\end{tabular}
}
\vspace{1em}
\end{table}

\begin{wraptable}{r}{0.55\textwidth}
\vspace{-2em}
\centering
\footnotesize
\caption{Clone accuracy (\%) under KnockoffNets attacks across datasets. }
\label{tab:DBME}
\begin{tabular}{lccc}
\toprule
\multirow{2}{*}{Baseline} & \multicolumn{3}{c}{Dataset} \\
\cmidrule(lr){2-4}
& MNIST & CIFAR10 & CIFAR100 \\
\midrule
undefended & 90.18 & 85.39 & 53.04 \\
EDM        & 51.34 & 68.50 & 41.16 \\
EDM+MeCo   & 46.19 & 59.18 & 35.71 \\
EDM+ACT    & 45.78 & 60.21 & 34.67 \\
MISLEADER  & \textbf{40.35} & \textbf{54.85} & \textbf{28.76} \\
\bottomrule
\vspace{-1em}
\end{tabular}
\end{wraptable}

\vspace{-2em}
\subsection{DBME Defense}

In this subsection, we present additional experimental results regarding the DBME defense on CIFAR-10, CIFAR-100, and MNIST, as presented in Table~\ref{tab:DBME}. Overall, MISLEADER achieves the lowest clone accuracy across all three real-world datasets under KnockoffNets attacks. It consistently outperforms all baseline defenses by a margin of at least 5\%, demonstrating its strong effectiveness in mitigating data-based model extraction.

\begin{wrapfigure}{r}{0.45\textwidth}
  \vspace{-4em}
  \centering
  \includegraphics[width=0.45\textwidth]{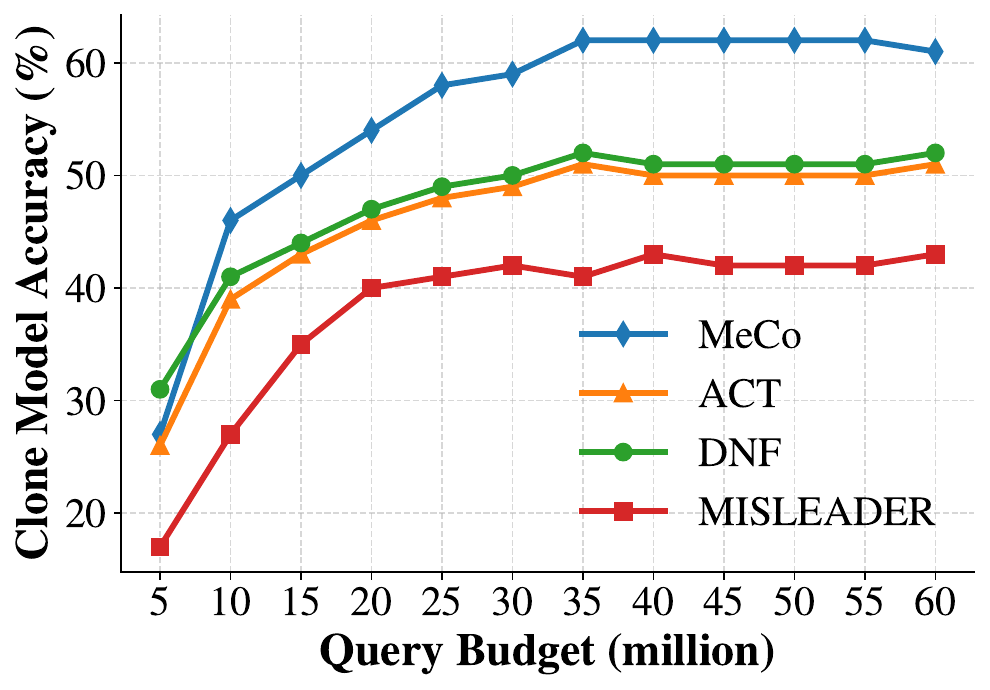}
  \vspace{-1.5em}  
  \caption{The test accuracy of the clone model (ResNet18\_8x) on CIFAR10 with varying query budget. MISLEADER significantly outperforms other SOTA baselines.}
  \label{fig:query-budget}
  \vspace{-3em}
\end{wrapfigure}

\subsection{Impact of Query Budget}

To assess the impact of query budget on extraction performance, we experiment with a range of query budgets representing different attacker capabilities. This setup allows us to evaluate the robustness of defense strategies under varying levels of query access. As visualized in Figure~\ref{fig:query-budget}, MISLEADER consistently outperforms state-of-the-art defenses across all budget levels, maintaining low clone model accuracy even when the attacker is granted a significantly larger number of queries. These results highlight MISLEADER’s resilience to stronger adversaries and demonstrate its effectiveness under both low- and high-budget threat scenarios.


\section{Related Work}

\textbf{Model Extraction.}
Model extraction (ME) attack aims to extract and clone the functionality of the public API with only query access. Based on the query data used by the attacker, model extraction techniques can be classified into two categories: data-based and data-free model extraction. (1) Data-based Model Extraction (DBME): DBME focuses on extracting the victim model using real dataset~\cite{papernot2017practical, orekondy2019knockoff, kariyappa2021protecting}. (2) Data-free Model Extraction (DFME): DFME, on the other hand, aims to extract the victim model using synthetic data exclusively~\cite{truong2021data, kariyappa2021maze, wang2021zero, sanyal2022towards,hu2023learning}. These approaches reduce the dataset requirement for stealing the victim model.

\textbf{Model Extraction Defense.}
Current model extraction defense methods can be broadly categorized into \textit{passive} and \textit{active} strategies. \textit{Active defense} techniques aim to proactively disrupt the extraction process by manipulating model behavior—either through output perturbation, architectural manipulation, or adversarial retraining. These include \textit{output-perturbation-based methods}, such as P-poison~\cite{orekondy2019prediction}, Adaptive Misinformation~\cite{kariyappa2020defending}, and GRAD~\cite{mazeika2022steer}, which modify predictions to mislead attackers; \textit{ensemble-based defenses} that increase extraction difficulty through architectural heterogeneity~\cite{kariyappa2021protecting}; and \textit{defensive training} approaches like MeCo~\cite{wang2023defending}, which retrain models for improved robustness. Other recent active defenses include ACT~\cite{wang2024defense}, which introduces Bayesian watermarking to amplify response shifts under OOD queries, and DNF~\cite{luandynamic}, which leverages dynamic early-exit networks to inject uncertainty into model responses. In contrast, \textit{passive defense} methods detect or verify model theft post hoc, including \textit{detection-based methods} that identify adversarial query patterns~\cite{juuti2019prada}, and \textit{verification-based approaches} such as watermarking~\cite{adi2018turning} and dataset inference~\cite{maini2021dataset}.

Despite these advancements, a key limitation persists: nearly all existing defenses assume the attacker issues out-of-distribution (OOD) queries, which may not hold in realistic MLaaS deployments where public datasets are diverse and easily accessible. In this work, we challenge this assumption and propose MISLEADER—a novel active defense strategy that does not rely on OOD detection. The unique OOD-agnostic design makes MISLEADER uniquely suited for real-world MLaaS scenarios, where defense reliability must persist under unrestricted attacker behavior.




\end{document}